\newtheorem{theorem}{Theorem}[section]
\newtheorem{lemma}[theorem]{Lemma}
\theoremstyle{definition}
\theoremstyle{remark}
\numberwithin{equation}{section} \errorcontextlines=0
\newcommand{\la}{\lambda}
\newcommand{\ep}{\epsilon}
\newcommand{\br}{\mathbf r}
\newcommand{\bs}{\mathbf s}
\newcommand{\bz}{\mathbf z}
\begin{document}
\title[Quantum Discord of Certain Two-Qubit States ]
{Quantum Discord of Certain Two-Qubit States}
\author{Jianming Zhou}
\address{Zhou: Jianghan University, Wuhan, Hubei 430056, China}

\author{Xiaoli Hu*}
\address{Hu: Jianghan University, Wuhan, Hubei 430056, China}
\email{xiaolihumath@163.com}

\author{Naihuan Jing}
\address{Jing: Department of Mathematics,
   North Carolina State University,
   Raleigh, NC 27695, USA}
\email{jing@math.ncsu.edu}
\keywords{Quantum discord, quantum correlations, bipartite quantum states, optimization on manifolds}
\thanks{*Corresponding author: Xiaoli Hu}
\keywords{Quantum discord, quantum correlations, bipartite quantum states, optimization on manifolds}
\subjclass[2010]{Primary: 81P40; Secondary: 81Qxx}

\begin{abstract}
Quantum discord is an effective measure of quantum correlation introduced by Olliver and Zurek. We evaluate analytically the quantum discord for a large family of non-X-states. Exact solutions of the quantum discord are obtained of the four parametric space for non-X-states. Dynamic behavior of the quantum discord is also explored under the action of the Kraus operator.

\end{abstract}
\maketitle
\section{Introduction}

Quantum correlations are one of the fundamental features in quantum computation and quantum information. Among various measurements of
quantum correlations, quantum discord has been studied as a particularly important problem \cite{01, 02, 03}.

The notion of quantum discord was introduced by Olliver and Zurek \cite{1} to measure the difference of two natural quantum extensions of the classical mutual information \cite{101,102,103,104,105,106,107}. The quantum classical mutual information is usually used to quantify the total correlations with well documented  basic physical significance \cite{ 4,5,6,7}.  Let's consider the bipartite quantum state $\rho$, the quantum mutual information is defined as
\begin{equation}\label{e:quntum-inf}
\mathcal{I}(\rho):=S(\rho^a)+S(\rho^b)-S(\rho),
\end{equation}
where $S(\rho):=-\mathrm{Tr}\rho\log_2(\rho)$ is the von Neumann entropy of the quantum state.

In order to reveal the essence of quantum correlation, measurement entropy  based conditional density operators are used to study the classical correlation \cite{1}. The von Neumann measurement is  an entire set of projectors $\{B_k\}$ such that $\sum_kB_k=I$ and $B_jB_k=\delta_{jk}B_k$. If the measurement $\{B_k\}$ is executed locally on one side of the bipartite quantum state $\rho$, the quantum state is turned into
\begin{equation}\label{e:marginal}
\rho_k=\frac{1}{p_k}\mathrm{Tr}_b(I\otimes B_k)\rho(I\otimes B_k)
\end{equation}
with the probability $p_k=\mathrm{Tr}(I\otimes B_k)\rho(I\otimes B_k)$, and here $I$ is the identity operator for the party $a$. The quantum conditional entropy with respect to the measurement $\{B_k\}$ is defined as $$S(\rho|\{B_k\})=\sum_kp_kS(\rho_k),$$
then the quantum mutual information is defined as
$$\mathcal{I}(\rho|\{B_k\})=S(\rho_a)-S(\rho|\{B_k\})$$
and the classical correlation is measured in terms of the quantity
$$\mathcal{C}(\rho):=sup_{\{B_k\}}\mathcal{I}(\rho|\{B_k\}).$$
To compare the two quantum analogs of the classical quantum information 
the so-called quantum discord is defined as the difference of these two quantities:
\begin{equation}\label{e:quantdis}
\mathcal{Q}(\rho):=\mathcal{I}(\rho)-\mathcal{C}(\rho).
\end{equation}

There are considerable studies on quantum discords for various types of quantum states \cite{02}. 
In \cite{L} an exact formula was
obtained for the Bell state. There are several well-known methods to compute the quantum discord for general X-states \cite{102}, and
in \cite{JY} exact and analytic formulas for the general X type states were given, and in this latter work some of the confusions in
previous computations of the quantum discord were clarified.
In \cite{9} it was shown that the quantum discord of 2-qubit is more robust than entanglement.
In \cite{10} the quantum discord dynamics of 2-qubit states in independent and common non-Markovian environments are evaluated. Ref. \cite{11}  provided an approach to compute one way quantum deficit of 2-qubit states. In \cite{12} an analytical formula of
quantum discord was presented for the two-qubit quantum state of rank-2 by studying its classical correlation (see also \cite{JY}). Despite all these progresses, it is still a difficult problem to
find exact formula of the quantum discord for the general bipartite state, for instance, the quantum discord of the non-X-type of two-qubit states with rank more than 2 is
unknown. In this paper, we study the quantum discord for the two-qubit states of all rank and derive exact formulas for several nontrivial cases.

We also study the dynamics of the quantum discord in this important case. We use the Kraus operators ${K_i}$ to discuss the behavior of the 2-qubit non-X-state $\rho$ through the phase damping channels, where $\sum\limits_{i}K_{i}^{\dag}K_i=1$. Under the phase damping $\rho$ changes into
\begin{equation}
\begin{split}
\tilde{\rho}=\sum_{i,j=1,2}K_{i}^A\otimes K_j^B\cdot\rho\cdot(K_i^A\otimes K_j^B)^\dag,
\end{split}
\end{equation}
where the Kraus operators can be defined as $K_1^{A(B)}=|0\rangle\langle 0|+\sqrt{1-\gamma}|1\rangle\langle1|$ and $K_2^{A(B)}=\sqrt{\gamma}|1\rangle\langle1|$ with the decoherence rate $\gamma\in[0,1]$.

This paper is organized into two parts. First we derive a formula of the quantum discord for a general quantum bipartite state in non-X-type
and then give the exact quantum discord for several nontrivial regions. In the last part we study how the quantum discord
behaves under the action of Kraus operators. Through this study one hopes to understand better the quantum discord for the general quantum state.

\section{Quantum discord for non-X-states}
The quantum discord for the Bell diagonal state was completely calculated by Luo \cite{L}, and the analytical expression of the general X-state quantum discord is obtained in \cite{JY}.
Here we consider a certain non-X states and compute its exact quantum discord. 

Let $\rho$ be the following quantum state
\begin{equation}\label{2.1}
\begin{split}
   \rho=\frac{1}{4}(I\otimes I+{\bf r}\cdot\vec{\sigma}\otimes I+I\otimes {\bf s}\cdot\vec{\sigma}+ \sum_{i=1}^3c_i\sigma_i\otimes \sigma_i),
\end{split}
\end{equation}
where $\vec{\sigma}=(\sigma_1,\sigma_2,\sigma_3)$ is the vector of Pauli matrices, ${\bf r}=(r_1,r_2,r_3), {\bf s}=(s_1,s_2,s_3), {\bf c}=(c_1,c_2,c_3) \in \mathbb R^3$. Here $\sigma_i$ are normalized as $\mathrm{tr}(\sigma_i\sigma_j)=2\delta_{ij}$. It is clear that the coefficients $r_i, s_i$ can be confined within the internal $[-1, 1]$. The two marginal states of $\rho$ are given by
\begin{equation}\label{e:margin}
\rho^a=\mathrm{Tr}_b\rho=\frac{1}{2}(I+\mathbf r\cdot\vec{\sigma}), \qquad
\rho^b=\mathrm{Tr}_a\rho=\frac{1}{2}(I+\mathbf s\cdot\vec{\sigma}).
\end{equation}
Then the von Neumann entropy of the quantum marginal states are given by
\begin{equation}
\begin{split}
S(\rho^a)
=&1-\frac{1+|{\bf r}|}{2}log_2(1+|{\bf r}|)
    -\frac{1-|{\bf r}|}{2}log_2(1-|{\bf r}|),\\
S(\rho^b)=&1-\frac{1+|{\bf s}|}{2}log_2(1+|{\bf s}|)
   -\frac{1-|{\bf s}|}{2}log_2(1-|{\bf s}|).
\end{split}
\end{equation}

Let us introduce the entropic function
\begin{equation}H_{\ep}(x)=\frac12 (1+\ep+x)\log_2(1+\ep+x)+\frac12 (1+\ep-x)\log_2(1+\ep-x).
\end{equation}
It is easy to see that $H_{\ep}(x)$ is an even function. Also $\min H_{\ep}(x)= H_{\ep}(0)=(1+\ep)log_2(1+\ep)$ and $\max H_{\ep}(x)= H_{\ep}(\max|x|)$.
Thus the quantum mutual information of $\rho$ is obtained as
\begin{equation}
\mathcal{I}(\rho)
=2-H_{\ep=0}(|{\bf r}|)-H_{\ep=0}(|{\bf s}|)
+\sum_{i=1}^4\lambda_i\log_2(\lambda_i),
\end{equation}
where $\la_i (i=1,\cdots,4)$ are the eigenvalues of $\rho$.

Now let's turn to the second mutual information---the classical correlation $\mathcal{C}(\rho)$, which is defined with help of
the von Neumann measurements. As it is well known that $\{B_k=V|k\rangle\langle k|V^\dag, k=0,1\}$, where $V\in \mathrm{SU}(2)$,
parameterize the von Neumann measures.

Note that $\mathrm{SU}(2)$ is homeomorphic to the unit sphere, so any unitary matrix $V=tI+\sqrt{-1}\sum_{i=1}^3y_i\sigma_i$
where $t,y_i \in \mathbb R$ $(i=1,2,3)$ are on the unit sphere:
\begin{equation}
t^2+\sum_{i=1}^3y_i^2=1
\end{equation}

Under the unitary transformation the two marginal states of $\rho$ in \eqref{e:margin} are changed to
\begin{align}
\rho_0&=\frac{1}{2(1+\mathbf s\mathbf z)}[(1+\mathbf s\mathbf z)I+(\mathbf r+c\mathbf z)\cdot\vec{\sigma}],\\
\rho_1&=\frac{1}{2(1-\mathbf s\mathbf z)}[(1-\mathbf s\mathbf z)I+(\mathbf r-c\mathbf z)\cdot\vec{\sigma}]
\end{align}
with $p_0=\frac{1+\mathbf s\mathbf z}{2}, p_1=\frac{1-\mathbf s\mathbf z}{2}$ and the unit vector $\mathbf z=(z_1,z_2,z_3)$ is given by 
$$z_1=2(-ty_2+y_1y_3),z_2=2(ty_1+y_2y_3),z_3=t^2+y_3^2-y_1^2-y_2^2.$$

The eigenvalues of $\rho_0$ and $\rho_1$ are seen to be
\begin{align}
\lambda_{\rho_0}^\pm&=\frac{1}{2(1+\mathbf s\mathbf z)}(1+\mathbf s\mathbf z\pm |\mathbf r+c\mathbf z|)\\
\lambda_{\rho_1}^\pm&=\frac{1}{2(1-\mathbf s\mathbf z)}(1-\mathbf s\mathbf z\pm |\mathbf r-c\mathbf z|)
\end{align}

Let
\begin{align}\label{G}
G(\bz)=-H_{\ep=0}(\bs\bz)+\frac{1}{2}H_{\ep=\bs\bz}(|\br+c\bz|)
+\frac{1}{2}H_{\ep=-\bs\bz}(|\br-c\bz|),
\end{align}
then the classical correlations can be given by
\begin{equation}
\begin{split}
\mathcal{C}(\rho)=&sup_{\{B_k\}}\mathcal{I}(\rho|\{B_k\})
=S(\rho^a)-sup\{\sum_{k=0,1}p_kS(\rho_k)\}\\
=&S(\rho^a)-sup\{\sum_{k=0,1}{p_k(\lambda_{\rho_k}^+\log_2 \lambda_{\rho_k}^++\lambda_{\rho_k}^-\log_2 \lambda_{\rho_k}^-)}\}\\
 =&-H_{\ep=0}(|\br|)+\max\{G(\bz)\}.
\end{split}
\end{equation}
The following result computes the quantum discord \eqref{e:quantdis} for some non-X states.
\begin{theorem}
When $\bs=0$ and $c_1=c_2=c_3=c$,
the quantum discord is
\begin{equation}\label{2.13}
\begin{split}
\mathcal{Q}(\rho)&=\frac{1}{2}H_{\ep=c}(|{\bf r}|)
+\frac{1}{2}H_{\ep=-c}(\sqrt{4c^2+|{\bf r}|^2})\\&-\frac{1}{2}[H_{\ep=0}(|\br|+|c|)
+H_{\ep=0}(||\br|-|c||)].
\end{split}
\end{equation}
In particular, when $c=|\br| \neq 0$, the quantum discord is
\begin{equation}
\begin{split}
\mathcal{Q}(\rho)=&=\frac{1}{4}(1-c+\sqrt{5}c)log_2(1-c+\sqrt{5}c)\\
&+\frac{1}{4}(1-c-\sqrt{5}c)log_2(1-c-\sqrt{5}c)\\
&-\frac{1}{4}(1-2c)log_2(1-2c);
\end{split}
\end{equation}
When $|\br| = |\bs| =0$,
 the quantum discord is
\begin{equation}
\mathcal{Q}(\rho)=\frac{1}{4}[(1-3c)log_2{(1-3c)}-2(1-c)log_2(1-c)+(1+c)log_2{(1+c)}].
\end{equation}
\end{theorem}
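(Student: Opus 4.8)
The plan is to subtract the two expressions already in hand. Since $\bs=0$ forces $H_0(|\bs|)=0$ and $H_0(\bs\bz)=0$, combining the given formula for $\mathcal I(\rho)$ with $\mathcal C(\rho)=-H_0(|\br|)+\max_{\bz}G(\bz)$ makes the $H_0(|\br|)$ terms cancel, leaving
\[
\mathcal Q(\rho)=2+\sum_{i=1}^4\la_i\log_2\la_i-\max_{\bz}G(\bz).
\]
So two things remain: evaluate $\max G$ and compute the eigenvalues $\la_i$. With $\bs=0$ the function reduces to $G(\bz)=\tfrac12 H_0(|\br+c\bz|)+\tfrac12 H_0(|\br-c\bz|)$, and because $\bz$ is a unit vector, $|\br\pm c\bz|^2=|\br|^2+c^2\pm 2c(\br\cdot\bz)$ depends on $\bz$ only through $u:=\br\cdot\bz\in[-|\br|,|\br|]$. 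Thus $G=g(u):=\tfrac12 H_0(\sqrt{a+2cu})+\tfrac12 H_0(\sqrt{a-2cu})$ with $a=|\br|^2+c^2$, an even function of $u$.

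The crux is to locate the maximizer of $g$ on $[-|\br|,|\br|]$; I claim it sits at the endpoints $u=\pm|\br|$, i.e. the optimal von Neumann measurement points along $\br$. I would prove this by showing that $\phi(v):=H_0(\sqrt v)$ is convex on $[0,1]$. A direct differentiation gives $\phi''(v)$ proportional to $\tfrac{2x}{1-x^2}-\ln\tfrac{1+x}{1-x}$ with $x=\sqrt v$, and the power-series identities $\tfrac{2x}{1-x^2}=2\sum_{k\ge0}x^{2k+1}$ and $\ln\tfrac{1+x}{1-x}=2\sum_{k\ge0}\tfrac{x^{2k+1}}{2k+1}$ show this difference equals $2\sum_{k\ge1}\tfrac{2k}{2k+1}x^{2k+1}>0$. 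Convexity of $\phi$ makes $g(u)=\tfrac12\phi(a+2cu)+\tfrac12\phi(a-2cu)$ convex, so on the symmetric interval its maximum is at $u=\pm|\br|$. There $a\pm 2c|\br|=(|\br|\pm|c|)^2$, whence $\max_{\bz}G(\bz)=\tfrac12\big[H_0(|\br|+|c|)+H_0(||\br|-|c||)\big]$, which is exactly the last bracket in \eqref{2.13}.

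For the eigenvalues I would use that $\sum_i\sigma_i\otimes\sigma_i$ is invariant under $U\otimes U$, so $\br$ may be rotated to $(0,0,|\br|)$ without changing the spectrum. In the computational basis $4\rho$ is then block diagonal: two $1\times1$ blocks equal to $1+c\pm|\br|$, and one $2\times2$ block $\left(\begin{smallmatrix}1+|\br|-c&2c\\2c&1-|\br|-c\end{smallmatrix}\right)$ with eigenvalues $1-c\pm\sqrt{|\br|^2+4c^2}$. Hence the four eigenvalues are $\tfrac14(1+c\pm|\br|)$ and $\tfrac14(1-c\pm\sqrt{|\br|^2+4c^2})$. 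Grouping $\la\log_2\la$ in the two $\pm$ pairs and matching against the definition of $H_\ep$ gives $2+\sum_i\la_i\log_2\la_i=\tfrac12 H_c(|\br|)+\tfrac12 H_{-c}(\sqrt{4c^2+|\br|^2})$, the first two terms of \eqref{2.13}; subtracting $\max G$ yields the theorem.

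The two special cases are then substitutions. For $c=|\br|\neq0$ one has $\sqrt{4c^2+|\br|^2}=\sqrt5\,c$ and $||\br|-|c||=0$, and the $\tfrac14(1+2c)\log_2(1+2c)$ contributions coming from $\tfrac12 H_c(|\br|)$ and from $\tfrac12 H_0(2c)$ cancel, leaving the stated $\sqrt5$ expression. For $\br=\bs=0$ the spectrum collapses to $\tfrac{1+c}{4}$ (threefold) and $\tfrac{1-3c}{4}$, while $G\equiv H_0(|c|)$ is constant, and the arithmetic reduces to the final displayed formula. I expect the convexity step---equivalently, the optimality of measuring along $\br$---to be the genuine obstacle, since the optimization over measurements is always the delicate point in discord computations; once $\br$ is axis-aligned the spectral bookkeeping is routine.
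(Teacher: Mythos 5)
Your proposal is correct, and its core is the same as the paper's: reduce the measurement optimization to a one-variable problem, show the maximum of $G$ sits at the endpoints, and subtract from the mutual-information part built out of the eigenvalues. Indeed your key inequality is literally the paper's: since $\phi'(v)=\tfrac14 g(\sqrt v)$ with $g(x)=\tfrac1x\log_2\tfrac{1+x}{1-x}$, your positivity of $\tfrac{2x}{1-x^2}-\ln\tfrac{1+x}{1-x}$ is exactly the paper's monotonicity of $g$, proved by the same power series. The differences are in the packaging, and yours buys a little. The paper parametrizes by $\theta=|\br+c\bz|^2$, needs a separate Lagrange-multiplier lemma (Lemma \ref{LE1}) to find the range of $\theta$, and argues that $G(\theta)$ decreases then increases about the midpoint $\theta=|\br|^2+c^2$; you parametrize by $u=\br\cdot\bz$, get the range $[-|\br|,|\br|]$ for free from Cauchy--Schwarz (making the lemma unnecessary), and conclude by convexity plus evenness. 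You also justify the spectrum $\tfrac14(1+c\pm|\br|)$, $\tfrac14(1-c\pm\sqrt{4c^2+|\br|^2})$ via $U\otimes U$-invariance of $\sum_i\sigma_i\otimes\sigma_i$ and block-diagonalization after aligning $\br$ with the $z$-axis, whereas the paper merely asserts these eigenvalues; and your endpoint/substitution checks for the two special cases match the paper's. Two small points to tighten: the identification $a\pm2c|\br|=(|\br|\pm|c|)^2$ holds only up to swapping the two terms when $c<0$ (harmless, since $G$ is symmetric in them, but worth saying), and you should record, as the paper does before its proof, that positivity of $\rho$ forces $|\br|+|c|\le 1$, so that $\phi$ and $H_{\ep=0}$ are only ever evaluated inside their domain $[0,1]$.
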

We first prove the following lemma. 

\begin{lemma}\label{LE1}Let $\theta=|\br+c\bz|^2$, then  $\min\theta = (|{\bf r}|-|c|)^2$ and $\max\theta = (|{\bf r}|+|c|)^2$.
\end{lemma}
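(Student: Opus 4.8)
The plan is to treat $\theta$ as a function of the unit vector $\bz$ alone and reduce everything to the elementary fact that a linear functional on the sphere is extremized in the direction of its defining vector. First I would expand
$$\theta=|\br+c\bz|^2=|\br|^2+2c\,(\br\cdot\bz)+c^2|\bz|^2 ,$$
and exploit that $\bz$ is a \emph{unit} vector. That $|\bz|=1$ is worth verifying rather than merely assuming: a direct computation with the given coordinates, using the normalization $t^2+y_1^2+y_2^2+y_3^2=1$, yields $z_1^2+z_2^2+z_3^2=(t^2+y_1^2+y_2^2+y_3^2)^2=1$. Hence the last term collapses to $c^2$ and $\theta=|\br|^2+c^2+2c\,(\br\cdot\bz)$, so the entire dependence on the measurement is concentrated in the single inner product $\br\cdot\bz$.

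Next I would bound $\br\cdot\bz$ by Cauchy--Schwarz: since $|\bz|=1$ one has $-|\br|\le \br\cdot\bz\le|\br|$, and therefore $-2|c|\,|\br|\le 2c\,(\br\cdot\bz)\le 2|c|\,|\br|$ regardless of the sign of $c$. Substituting gives
$$(|\br|-|c|)^2=|\br|^2+c^2-2|c|\,|\br|\le\theta\le|\br|^2+c^2+2|c|\,|\br|=(|\br|+|c|)^2 ,$$
which are exactly the claimed bounds. The appearance of $|c|$ rather than $c$ is the only place to be careful, and it is handled automatically by absorbing the sign of $c$ into the chosen direction of $\bz$.

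Finally I would check attainability, since the extreme values require $\bz$ to be parallel or antiparallel to $\br$. This forces me to know that $\bz$ sweeps out the \emph{entire} unit sphere as $(t,y_1,y_2,y_3)$ ranges over $S^3$, which is precisely the statement that $\bz$ is the image of the axis $e_3$ under the rotation in $\mathrm{SO}(3)$ associated to $V\in\mathrm{SU}(2)$ via $V\sigma_3 V^\dag=\bz\cdot\vec{\sigma}$; as every unit vector is such a rotation of $e_3$, this map is surjective onto $S^2$. Choosing $\bz=\mathrm{sgn}(c)\,\br/|\br|$ attains the maximum and $\bz=-\mathrm{sgn}(c)\,\br/|\br|$ the minimum (and when $\br=0$ every $\bz$ gives $\theta=c^2=(|\br|\pm|c|)^2$, so the claim is trivial). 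I do not expect a genuine obstacle: the only substantive points are the identity $|\bz|=1$ and the surjectivity of the parametrization onto $S^2$, both routine once the coordinate expressions are in hand.
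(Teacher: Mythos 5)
Your proof is correct, but it follows a genuinely different route from the paper's. After the same expansion $\theta=|\br|^2+c^2+2c(\br\cdot\bz)$, the paper extremizes the linear term by Lagrange multipliers on the sphere: setting $F(\bz,\mu)=2c(\br\cdot\bz)+\mu(1-|\bz|^2)$, the stationarity conditions $2cr_i-2\mu z_i=0$ force $\mu=\pm|c||\br|$, and evaluating $\theta$ at the two critical points gives $(|\br|\pm|c|)^2$. Your Cauchy--Schwarz argument obtains the same bounds without calculus, and it settles points the paper leaves implicit: the inequality shows by itself that the two values are \emph{global} extremes (a multiplier computation only locates critical points, and one must still invoke compactness of the sphere), it covers the degenerate case $\br=0$, where the paper's critical points $z_i=\pm r_i/|\br|$ are undefined, and you verify both that the coordinate expressions for $z_i$ really define a unit vector and that $\bz$ sweeps all of $S^2$ as $V$ ranges over $\mathrm{SU}(2)$, which is needed for attainability in either argument. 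Your bookkeeping is also safer on signs: in the paper, the case $\mu=|c||\br|$ actually yields $\bz=\mathrm{sgn}(c)\,\br/|\br|$, hence $2c(\br\cdot\bz)=+2|c||\br|$, which gives the maximum rather than the minimum as labelled there. The one microscopic point you gloss over is $c=0$, where $\mathrm{sgn}(c)\,\br/|\br|$ is not a unit vector; but in that case $\theta\equiv|\br|^2$ and the lemma is trivial, exactly as in your $\br=0$ remark.
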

\begin{proof} Since $\theta = {\bf r}^2+c^2+2c(r_1z_1+r_2z_2+r_3z_3)$ and  $z_1^2+z_2^2+z_3^2=1$, so we consider 
$$F(z_1,z_2,z_3,\mu)=2c(r_1z_1+r_2z_2+r_3z_3)+\mu(1-z_1^2-z_2^2-z_3^2),$$
where $\mu$ is a parameter. Then
$\frac{\partial F}{\partial z_i}=2cr_i-2\mu z_i=0, (i=1,2,3)$ implies that
$$\mu=\pm\sqrt{c^2(r_1^2+r_2^2+r_3^2)}=\pm |c||\br|.$$
When $\mu=|c||\br|$, then $z_i=\frac{r_i}{|\bf r|}$ and
the minimal value $\theta_{min} = (|{\bf r}|-|c|)^2$.
Similarly when $\mu=-|c||\br|$,
the maximal value $\theta_{max} = (|{\bf r}|+|c|)^2$.
\end{proof}

When $\bs=0, c_1=c_2=c_3=c$, the function $G(z)$ in \eqref{G} becomes
\begin{equation}G(\theta)=\frac{1}{2}H_{\ep=0}(\sqrt{\theta} )+\frac{1}{2}H_{\ep=0}(\sqrt{2(|{\bf r}|^2+c^2)-\theta} ).
\end{equation}
Meanwhile, the eigenvalues of $\rho$ in this case are
\begin{equation}
\lambda_{1,2}=\frac{1}{4}(1+c\pm|\br|);\quad
\lambda_{3,4}=\frac{1}{4}(1-c\pm\sqrt{4c^2+|\br|^2}).
\end{equation}
As $\rho$ is nonnegative, $(1+c)^2\geq \br^2$ and $(1-c)^2\geq 4c^2+|\br|^2$. Subsequently
$|\br|^2+c^2\leq 1$, therefore both $|\br|, |c|\leq 1$. Moreover, $|\br|- c\leq1$ and  $|\br|+ c\leq1$.
Now we can prove the theorem.
\begin{proof} It is obvious that
\begin{equation}\label{2.15}
G((|{\br}|+|c|)^2)=G((|{\br}|-|c|)^2)=\frac{1}{2}H_{\ep=0}(|{\br}|+|c|)+\frac{1}{2}H_{\ep=0}(||{\br}|-|c||)
\end{equation}
The derivative of $G(\theta)$ is equal to
\begin{equation}\label{2.16}
\begin{split}
\frac{\partial G(\theta)}{\partial \theta}&=\frac{1}{8}[\frac{1}{\sqrt{\theta}}\log_2\frac{1+\sqrt{\theta}}{1-\sqrt{\theta}}\\
&-\frac{1}{\sqrt{2(|{\bf r}|^2+c^2)-\theta}}log_2\frac{1+\sqrt{2(|{\bf r}|^2+c^2)
-\theta}}{1-\sqrt{2(|{\bf r}|^2+c^2)-\theta}}].
\end{split}
\end{equation}
Let $g(x)=\frac{1}{x}log_2\frac{1+x}{1-x}$, $x\in(0,1)$. The function $g(x)$ is strictly increasing as
$$\frac{\partial g(x)}{\partial x}=\frac{2}{xln2}(\sum_{n=0}^{\infty}\frac{-x^{2n}}{2n+1}+\sum_{n=0}^{\infty}x^{2n})>0.$$
So when $\theta>|{\bf r}|^2+c^2$, \eqref{2.16} implies that $\frac{\partial G(\theta)}{\partial \theta} > 0$, so $G(\theta)$ is an increasing function. Similarly, when $\theta<|{\bf r}|^2+c^2$, $G(\theta)$ is a decreasing function. Hence $G(\theta)$ has the minimal value at $\theta=|{\bf r}|^2+c^2$, and $$\max G(\theta)=G((|{\bf r}|+|c|)^2)=G((|{\bf r}|-|c|)^2).$$

In particular, if $|{\bf r}|=|c|\neq 0$,  then $\theta\in [0,2(|{\bf r}|^2+c^2)]$, we have
\begin{equation}
\begin{split}
\max G(\theta)&=G(0)=G(2(|{\bf r}|^2+c^2))=\frac{1}{2}H(2(|{\bf r}|^2+c^2))\\
&=\frac{1}{4}[(1+2c)log_2(1+2c)+(1-2c)log_2(1-2c)].
\end{split}
\end{equation}
If $|{\bf r}|=0$, $c_1=c_2=c_3=c$, the state $\rho$ in \eqref{2.1} degenerate to the Werner state. We have $\theta=c^2$, then
\begin{equation}
\max{G(\theta)}=G(c^2)=H(c)=\frac{1}{2}\big[(1+c)log_2(1+c)+(1-c)log_2(1-c)\big].
\end{equation}
\end{proof}
Similarly, when $|\br|=0$ and $c_1=c_2=c_3=c$, the quantum discord is
\begin{equation}\label{2.14}
\begin{split}
\mathcal{Q}(\rho)&=\frac{1}{2}H_{\ep=-c}(\sqrt{4c^2+|{\bf s}|^2})-\frac{1}{2}H_{\ep=-c}(|{\bf s}|).
\end{split}
\end{equation}
\begin{theorem} When $|\bs|=0, c_1=c_2=0$, the quantum discord
$\mathcal{Q}(\rho)=0$;
When $|\br|=0, c_1=c_2=0$, the quantum discord is
\begin{equation}
\mathcal{Q}(\rho)=H_{\ep=0}(\frac{|\bs|}{\sqrt{s_1^2+s_2^2+(c_3+ s_3)^2}}).
\end{equation}
\end{theorem}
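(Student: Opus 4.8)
The plan is to push everything through the single optimization $\max_{\bz}G(\bz)$ that has already been isolated. Since the classical correlation is $\mathcal{C}(\rho)=-H_{\ep=0}(|\br|)+\max_{\bz}G(\bz)$ and the mutual information carries the \emph{same} term $-H_{\ep=0}(|\br|)$, these cancel in $\mathcal{Q}=\mathcal I-\mathcal C$, leaving
\[
\mathcal{Q}(\rho)=2-H_{\ep=0}(|\bs|)+\sum_{i=1}^4\la_i\log_2\la_i-\max_{\bz}G(\bz).
\]
Thus the proof splits into three computations: the eigenvalues $\la_i$, the entropic sum $2+\sum_i\la_i\log_2\la_i$, and the maximization of $G$ over the unit sphere.

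First I would diagonalize $\rho$. With $c_1=c_2=0$ the matrix $4\rho$ breaks into two $2\times2$ blocks (the pairings $\{|00\rangle,|01\rangle\},\{|10\rangle,|11\rangle\}$ when $\br=0$, and $\{|00\rangle,|10\rangle\},\{|01\rangle,|11\rangle\}$ when $\bs=0$), each diagonalized explicitly. For $\br=0$ this gives $\la=\tfrac14(1\pm U),\tfrac14(1\pm V)$ with $U=\sqrt{|\bs|^2+c_3^2+2s_3c_3}$ and $V=\sqrt{|\bs|^2+c_3^2-2s_3c_3}$, so that $2+\sum_i\la_i\log_2\la_i=\tfrac12\big[H_{\ep=0}(U)+H_{\ep=0}(V)\big]$; the case $\bs=0$ is the mirror image, with $U',V'$ obtained by replacing $(\bs,s_3)$ by $(\br,r_3)$.

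For the first assertion ($\bs=0$) the vanishing of $\bs\bz$ collapses $G$ to $\tfrac12H_{\ep=0}(|\br+c\bz|)+\tfrac12H_{\ep=0}(|\br-c\bz|)$, in which only $z_3$ enters through $c_3z_3$. Setting $\Psi(p)=H_{\ep=0}(\sqrt{r_1^2+r_2^2+p^2})$, a short computation gives $\Psi'(p)=\tfrac12\,p\,g(\sqrt{r_1^2+r_2^2+p^2})$ with $g(x)=\tfrac1x\log_2\tfrac{1+x}{1-x}$ the strictly increasing function from the proof of the previous theorem; hence $\Psi''\ge0$, i.e.\ $\Psi$ is convex. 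Since $G=\tfrac12[\Psi(r_3+c_3z_3)+\Psi(r_3-c_3z_3)]$ is a symmetric average of a convex function, it increases in $|c_3z_3|$, so $\max G$ is attained at $z_3=\pm1$; evaluating there returns exactly $\tfrac12[H_{\ep=0}(U')+H_{\ep=0}(V')]=2+\sum_i\la_i\log_2\la_i$, whence $\mathcal{Q}=0$. (This is forced conceptually: for $\bs=0,\,c_1=c_2=0$ the state equals the classical--quantum state $\tfrac12(\tau_0\otimes|0\rangle\langle0|+\tau_1\otimes|1\rangle\langle1|)$, whose discord vanishes.)

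For the second assertion ($\br=0$) I would first cut the sphere down to a plane. Writing $\beta=\bs\bz$ and $\gamma=|c_3z_3|$, one finds
\[
\frac{\partial G}{\partial\beta}=\frac14\log_2\!\left[\frac{(1-\beta)^2\big((1+\beta)^2-\gamma^2\big)}{(1+\beta)^2\big((1-\beta)^2-\gamma^2\big)}\right],
\]
which has the sign of $\beta$; so for each fixed $z_3$ the maximum is reached by aligning $(z_1,z_2)$ with $(s_1,s_2)$, and after a rotation about the $3$-axis the problem reduces to a one–parameter maximization over $\bz=(\sin\phi,0,\cos\phi)$ with $\beta=s_\perp\sin\phi+s_3\cos\phi$, $s_\perp=\sqrt{s_1^2+s_2^2}$, and $\gamma=|c_3\cos\phi|$. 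The hard part is exactly this last step: solving $\partial G/\partial\phi=0$ to locate the optimal direction and then checking, via the symmetry identity $H_{\ep=a}(b)+H_{\ep=-a}(b)=H_{\ep=b}(a)+H_{\ep=-b}(a)$, that the resulting $\max G$, inserted into the boxed expression above, collapses to the single claimed entropic term. I expect the stationarity condition to pin down $\tan\phi$ in terms of $s_\perp,s_3,c_3$, and the bulk of the labour—and the place where the stated closed form must be verified most carefully, especially in the degenerate limits $c_3\to0$ and $s_\perp\to0$—to lie in this final algebraic reduction rather than in the setup.
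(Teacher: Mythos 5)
First, a point of reference: the paper offers \emph{no} proof of this theorem at all --- it is stated bare, between the proof of the preceding theorem and the statement of the next one --- so your attempt can only be judged on its own merits. Your treatment of the first claim ($\bs=0$, $c_1=c_2=0$) is complete and correct: the reduction of $G$ to $\frac{1}{2}[\Psi(r_3+c_3z_3)+\Psi(r_3-c_3z_3)]$ with $\Psi(p)=H_{\ep=0}(\sqrt{r_1^2+r_2^2+p^2})$ convex, the conclusion that the maximum sits at $z_3=\pm1$, the block diagonalization giving eigenvalues $\frac14(1\pm U')$, $\frac14(1\pm V')$, and the exact cancellation against $2+\sum_i\la_i\log_2\la_i$ all check out, and the classical--quantum observation (the state is classical on the measured party $b$) correctly explains why the answer is forced to be $0$. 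In this half you have actually supplied something the paper lacks.

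The genuine gap is the second claim, and it cannot be filled: you stop exactly at the decisive step (solving the stationarity condition and matching $\max G$ against the claimed closed form), and no completion exists because the printed formula is false. The quickest way to see this is the degenerate limit you yourself flagged but postponed: at $c_3=0$ (with $\br=0$, $c_1=c_2=0$, $\bs\neq0$) the state is the product state $\frac12 I\otimes\frac12(I+\bs\cdot\vec{\sigma})$, whose discord is $0$, while the formula returns $H_{\ep=0}(|\bs|/|\bs|)=H_{\ep=0}(1)=1$. The failure persists away from that limit: for $\bs=(0.5,0,0)$, $c_3=0.5$ the formula gives $H_{\ep=0}(1/\sqrt{2})\approx0.399$, which exceeds the total mutual information
\begin{equation*}
\mathcal{I}(\rho)=H_{\ep=0}(1/\sqrt{2})-H_{\ep=0}(0.5)\approx0.210,
\end{equation*}
and this is impossible: since $\rho^a=I/2$ here, $S(\rho^a)=1\geq\sum_k p_kS(\rho_k)$ for every measurement, so $\mathcal{C}(\rho)\geq0$ and $\mathcal{Q}(\rho)\leq\mathcal{I}(\rho)$. (Numerically the true discord in that example is about $0.02$, with the optimum essentially at $\bz=(0,0,\pm1)$.) Your setup for this half --- the reduction to the variables $\beta=\bs\bz$, $\gamma=|c_3z_3|$ and the sign analysis of $\partial G/\partial\beta$, which I verified --- is sound; the target identity you were trying to verify is what is wrong, and the ``careful check in the degenerate limit $c_3\to0$'' that you deferred is precisely the check that falsifies the theorem as stated.
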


\begin{theorem}
 When $\bs=0,c_3=0,c_1=c_2=c$, the quantum discord is
\begin{equation}
\begin{split}
\mathcal{Q}(\rho)&=\frac{1}{2}[H_{\ep=0}(\alpha_+)+H_{\ep=0}(\alpha_-)-
H_{\ep=0}(\beta_+)-H_{\ep=0}(\beta_-)],
\end{split}
\end{equation}
where $$\alpha_{\pm}=\sqrt{2c^2+r_1^2+r_2^2+r_3^2\pm2\sqrt{c^4+c^2r_1^2+c^2r_2^2}},$$
$$\beta_{\pm}=\sqrt{(r_1\pm\frac{r_1c}{\sqrt{r_1^2+r_2^2}})^2+(r_2\pm\frac{r_2c}{\sqrt{r_1^2+r_2^2}})^2+r_3^2},$$
\end{theorem}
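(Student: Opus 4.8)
The plan is to combine $\mathcal{Q}(\rho)=\mathcal{I}(\rho)-\mathcal{C}(\rho)$ with the two expressions already assembled above. Since $\bs=0$ gives $H_{\ep=0}(|\bs|)=0$ and $\bs\bz=0$, we have $\mathcal{I}(\rho)=2-H_{\ep=0}(|\br|)+\sum_{i=1}^4\la_i\log_2\la_i$ and $\mathcal{C}(\rho)=-H_{\ep=0}(|\br|)+\max G(\bz)$, so the $-H_{\ep=0}(|\br|)$ terms cancel in the difference and it suffices to prove two separate identities: (i) $2+\sum_i\la_i\log_2\la_i=\tfrac12[H_{\ep=0}(\alpha_+)+H_{\ep=0}(\alpha_-)]$, a statement about the spectrum of $\rho$; and (ii) $\max_{\bz}G(\bz)=\tfrac12[H_{\ep=0}(\beta_+)+H_{\ep=0}(\beta_-)]$, an optimization on the unit sphere. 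I expect (ii) to be the hard part.

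For (i) I would write $4\rho$ in the basis $\{|00\rangle,|01\rangle,|10\rangle,|11\rangle\}$: with $c_3=0$ and $c_1=c_2=c$ the term $c(\sigma_1\otimes\sigma_1+\sigma_2\otimes\sigma_2)$ contributes only entries $2c$ coupling $|01\rangle$ and $|10\rangle$, while $r_1\sigma_1+r_2\sigma_2$ contributes $r_1\mp ir_2$ and $r_3\sigma_3$ the diagonal $\pm r_3$. Putting $M=4\rho-I$, $a=r_3-\mu$, $b=-r_3-\mu$ and $R=r_1^2+r_2^2$, a cofactor expansion collapses $\det(M-\mu I)$ to $P^2-(4c^2+2R)P+R^2$ with $P=ab=\mu^2-r_3^2$. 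Solving the quadratic yields $P=2c^2+R\pm2|c|\sqrt{c^2+R}$, hence $\mu^2=2c^2+|\br|^2\pm2|c|\sqrt{c^2+r_1^2+r_2^2}=\alpha_\pm^2$ and the eigenvalues $\la=\tfrac14(1\pm\alpha_\pm)$. A direct substitution, using $\sum(1\pm\alpha_\pm)=4$, then gives identity (i).

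For (ii), with the stated parameters $G(\bz)=\tfrac12H_{\ep=0}(\sqrt{\theta_+})+\tfrac12H_{\ep=0}(\sqrt{\theta_-})$ where $\theta_\pm=|\br|^2+c^2(z_1^2+z_2^2)\pm2c(r_1z_1+r_2z_2)$. Writing $v=z_1^2+z_2^2\in[0,1]$ and $u=r_1z_1+r_2z_2$, Cauchy--Schwarz gives $|u|\le\sqrt{Rv}$. Setting $\phi(\theta)=H_{\ep=0}(\sqrt{\theta})$, one has $\phi'(\theta)=\tfrac14 g(\sqrt{\theta})$ with $g$ the strictly increasing function from the proof of the first theorem above, so $\phi$ is convex; hence for fixed $v$ the even, convex map $u\mapsto\phi(\theta_+)+\phi(\theta_-)$ attains its maximum at the endpoint $|u|=\sqrt{Rv}$. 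With $t=\sqrt{v}$ this turns $\theta_\pm$ into $(\sqrt{R}\pm ct)^2+r_3^2$, and (taking $c\ge0$ without loss, by the evenness of $H_{\ep=0}$)
$$\frac{dG}{dt}=c\big[\phi'(\theta_+)(\sqrt{R}+ct)-\phi'(\theta_-)(\sqrt{R}-ct)\big]>0,\qquad 0<t\le1,$$
since $\theta_+\ge\theta_-$ forces $\phi'(\theta_+)\ge\phi'(\theta_-)>0$ while the factors $\sqrt{R}\pm ct$ carry the correct signs whether $\sqrt{R}\ge ct$ or $\sqrt{R}<ct$. Thus $G$ increases in $t$ and is maximized at $t=1$, i.e. $z_3=0$ and $\bz=(r_1,r_2,0)/\sqrt{R}$. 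There $\theta_\pm=(\sqrt{R}\pm c)^2+r_3^2$, which equals the stated $\beta_\pm^2$ after the simplification $\sum_{i=1,2}r_i^2(1\pm c/\sqrt{R})^2=(\sqrt{R}\pm c)^2$, giving identity (ii).

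Assembling (i) and (ii) in $\mathcal{Q}=\mathcal{I}-\mathcal{C}$ produces exactly the claimed formula. The main obstacle is step (ii): one must show the constrained maximum over the whole sphere is attained simultaneously at the boundary $z_3=0$ and at the Cauchy--Schwarz extreme, rather than at some interior critical point, and the convexity of $\phi$ is precisely what reduces the two-parameter problem to the single monotone variable $t$. Two caveats I would record: positivity of $\rho$ forces $\beta_+\le\alpha_+\le1$, so all arguments stay inside the domain $(0,1)$ of $g$ and $\phi'$; and the displayed $\beta_\pm$ presuppose $r_1^2+r_2^2\neq0$, the degenerate case $r_1=r_2=0$ reducing to the Werner-type computation already treated.
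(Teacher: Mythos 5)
Your proposal is correct, but there is no paper proof to compare it with: the theorem is stated bare in the paper (it sits between the $c_1=c_2=0$ theorem and Example 1, with no argument attached), so your writeup in fact supplies the missing proof rather than duplicating or deviating from one. I verified both halves. For (i), the cofactor expansion of $\det(M-\mu I)$ with $M=4\rho-I$ does collapse to $P^2-(4c^2+2R)P+R^2$, $P=\mu^2-r_3^2$, $R=r_1^2+r_2^2$, so the eigenvalues of $\rho$ are $\tfrac14(1\pm\alpha_\pm)$ and $2+\sum_i\la_i\log_2\la_i=\tfrac12[H_{\ep=0}(\alpha_+)+H_{\ep=0}(\alpha_-)]$ follows from $\sum(1\pm\alpha_\pm)=4$. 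For (ii), your reduction is the genuinely new ingredient relative to the paper's own machinery: in the paper's proof of its first theorem ($c_1=c_2=c_3=c$, $\bs=0$) everything hinges on $|\br+c\bz|^2+|\br-c\bz|^2=2(|\br|^2+c^2)$ being constant on the unit sphere, which collapses $G$ to a one-variable function of $\theta=|\br+c\bz|^2$ whose range is found by the Lagrange-multiplier lemma. With $c_3=0$ that sum equals $2(|\br|^2+c^2(z_1^2+z_2^2))$ and is not constant, so the paper's substitution is unavailable; your two-variable reduction to $(u,v)$, the Cauchy--Schwarz endpoint argument via convexity of $\phi(\theta)=H_{\ep=0}(\sqrt{\theta})$ (which indeed follows from the paper's monotone function $g$ through $\phi'(\theta)=\tfrac14 g(\sqrt{\theta})$), and the final monotonicity in $t=\sqrt{v}$ are exactly what is needed, and the evaluation $(\sqrt{R}\pm c)^2+r_3^2=\beta_\pm^2$ at $t=1$ matches the stated $\beta_\pm$. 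Your two recorded caveats are also the right ones: positivity of $\rho$ gives $\beta_+\le\alpha_+\le 1$, keeping all entropy arguments inside $[0,1]$, and the formula for $\beta_\pm$ presupposes $r_1^2+r_2^2\neq 0$, the degenerate case being covered by the paper's earlier theorems (a point the paper itself never flags).
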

Example 1. Consider $\rho$ with $s_1=0.1, s_2=0.2, s_3=0.2, c=0.3$. $\rho$ can be written in the form
\begin{equation}
\begin{split}
\rho=\left(                 
  \begin{array}{cccc}   
    0.375 & 0.025-0.05i & 0 & 0 \\  
    0.025+0.05i & 0.125 & 0.15 & 0\\
    0 & 0.15 & 0.225 & 0.025-0.05i\\
    0 & 0 & 0.025+0.05i & 0.275 \\  
  \end{array}
\right)
\end{split}
\end{equation}
The eigenvalues of $\rho$ are $\lambda_1=0.0073, \lambda_2=0.25, \lambda_3=0.3427, \lambda_4=0.4$ and the behavior of $G(\theta)$ is described in Fig.1. The quantum discord $\mathcal{Q}(\rho)=0.1058844$.
\begin{figure}[!t]
\centering
\includegraphics[width=2.5in]{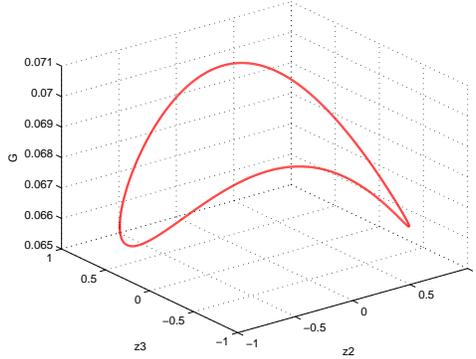}
\caption{The behavior of $G(\theta)$ for $\theta\in[0,0.0697]$ with parameters $s_1=0.1, s_2=0.2, s_3=0.2, c=0.3$}
\label{fig:rate_t50}
\end{figure}

Example 2. Let $s_1=s_2=s_3=0, r_1=0.1, r_2=0.2, r_3=0.25, c=0.3$, then $\rho$ is given by
\begin{equation}
\begin{split}
\rho=\left(
  \begin{array}{cccc}
    0.25&0&0.025-0.05i&0\\
    0&0.25&0.15&0.025-0.05i\\
    0.025+0.05i&0.15&0.25&0\\
    0&0.025+0.05i&0&0.25\\
  \end{array}
\right)
\end{split}
\end{equation}

The eigenvalues of $\rho$ are $\lambda_1=0.0815, \lambda_2=0.2315, \lambda_3=0.2685, \lambda_4=0.4185$, the quantum discord  $\mathcal{Q}(\rho)=0.0271$. The behavior of $G(\theta)$ is depicted in Fig.2. We can observe that the max of $G(\theta)$ is 0.2321.
\begin{figure}[!t]
\centering
\includegraphics[width=2.5in]{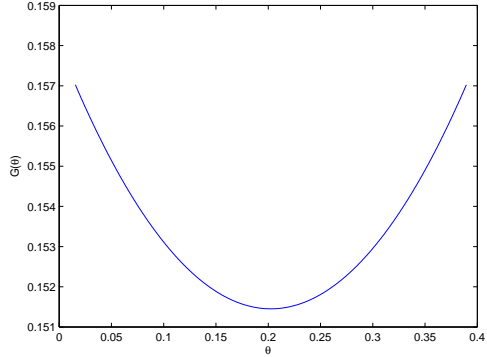}
\caption{The behavior of $G(\theta)$ with parameters $r_1=0.1, r_2=0.2, r_3=0.25, c=0.3$}
\label{fig:1}
\end{figure}

\section{Dynamics of quantum discord under phase damping channel}
In this section, we use the Kraus operators ${K_i}$ to discuss the behavior of the 2-qubit non-X-state $\rho$ through the phase damping channels [20], where $\sum\limits_{i}K_{i}^{\dag}K_i=1$. Under the phase damping $\rho$ is changed into
\begin{equation}
\begin{split}
\tilde{\rho}=\sum_{i,j=1,2}K_{i}^A\otimes K_j^B\cdot\rho\cdot(K_i^A\otimes K_j^B)^\dag
\end{split}
\end{equation}
where the Kraus operators can be defined as $K_1^{A(B)}=|0\rangle\langle 0|+\sqrt{1-\gamma}|1\rangle\langle1|$ and $K_2^{A(B)}=\sqrt{\gamma}|1\rangle\langle1|$ with the decoherence rate $\gamma\in[0,1]$. Therefore, under the phase damping $\rho$ in (2.1) becomes
\begin{equation}\label{1}
\begin{split}
\tilde{\rho}=&\frac{1}{4}[I\otimes I+\sum_{i=1,2}r_i\sqrt{1-\gamma}\sigma_i\otimes I+r_3\sigma_3\otimes I+I\otimes \sum_{i=1,2}s_i\sqrt{1-\gamma}\sigma_i
\\&+I\otimes s_3\sigma_3+c_3\sigma_3\otimes\sigma_3+\sum_{i=1,2}(1-\gamma)c_i\sigma_i\otimes\sigma_i].
\end{split}
\end{equation}

The two marginal states of $\tilde{\rho}$ are
\begin{align}
\tilde{\rho}^{a}&=\frac{1}{2}(I+\sum_{i=1,2}r_i\sqrt{1-\gamma}\sigma_i+r_3\sigma_3); \\
\tilde{\rho}^{b}&=\frac{1}{2}(I+\sum_{i=1,2}s_i\sqrt{1-\gamma}\sigma_i+s_3\sigma_3).
\end{align}
Thus the quantum mutual information of $\tilde{\rho}$ can be written as
\begin{equation}
\begin{split}
\mathcal{I}(\tilde{\rho})&=S(\tilde{\rho}^a)+S(\tilde{\rho}^b)-S(\tilde{\rho})\\
&=2-H_{\ep=0}(\sqrt{|{\br}|^2-\gamma r_1^2-\gamma r_2^2})-
H_{\ep=0}(\sqrt{|{\bs}|^2-\gamma s_1^2-\gamma s_2^2})\\
&+\sum_i^4\widetilde{\lambda}_ilog_2\widetilde{\lambda}_i,
\end{split}
\end{equation}
where $\widetilde{\lambda}_i (i=1,\cdots,4)$ are eigenvalues of $\widetilde{\rho}$.
Under the unitary transformation, the two marginal states of $\tilde{\rho}$ becomes
\begin{equation}
\begin{split}
\tilde{\rho}_{k}&=\frac{1}{2}[(1+(-1)^k\sqrt{1-\gamma}(s_1z_1+s_2z_2)+(-1)^k s_3z_3)I\\
&+\sum_i^2r_i\sqrt{1-\gamma}\sigma_i+r_3\sigma_3+(-1)^k c_3z_3\sigma_3+(-1)^k\sum_i^2c_i(1-\gamma)\sigma_iz_i];
\end{split}
\end{equation}
with $p_k=\frac{1}{2}(1+(-1)^k\sqrt{1-\gamma}(s_1z_1+s_2z_2)+(-1)^ks_3z_3)$ and $k=0,1$. The eigenvalues of $\tilde{\rho}_0$ and $\tilde{\rho}_1$ are given by
\begin{equation}
\lambda_{\tilde{\rho}_0}^\pm=\frac{1}{2(1+\varepsilon_{+})}(1+\varepsilon_{+}
\pm\sqrt{\zeta_{+}}); \ \
\lambda_{\tilde{\rho}_1}^\pm=\frac{1}{2(1+\varepsilon_{-})}(1+\varepsilon_{-}
\pm\sqrt{\zeta_{-}}),
\end{equation}
where $\varepsilon_{\pm}=\pm\sqrt{1-\gamma}(s_1z_1+s_2z_2)\pm s_3z_3$, $\zeta_{\pm}=(1-\gamma)[(r_1\pm \sqrt{1-\gamma}c_1z_1)^2+(r_2\pm\sqrt{1-\gamma}c_2z_2)^2]+(r_3\pm c_3z_3)^2$
The classical correlation $\mathcal{C}(\tilde{\rho})$ can be given by
\begin{equation}
\begin{split}
\mathcal{C}(\tilde{\rho})=&-H_{\ep=0}(\sqrt{|{\bf r}|^2-\gamma r_1^2-\gamma r_2^2})+\max \tilde{G}(\bz),
\end{split}
\end{equation}
where
\begin{equation}\label{3.9}
\tilde{G}(\bz)=-H_{\ep=0}(\varepsilon_{+})+
\frac{1}{2}(H_{\ep=\varepsilon_{+}}(\delta_+)+H_{\ep=\varepsilon_{-}}(\delta_-))
\end{equation}
with $\delta_{\pm}=\sqrt{(1-\gamma)[(r_1\pm\sqrt{1-\gamma}c_1z_1)^2+(r_2\pm\sqrt{1-\gamma}c_2z_2)^2]+(r_3\pm c_3z_3)^2}$.
Then the quantum discord of $\tilde{\rho}$ is
\begin{equation}\label{4.9}
\begin{split}
\mathcal{Q}(\tilde{\rho})&=2+\sum_i^4
\widetilde{\lambda}_ilog_2\widetilde{\lambda}_i-\max{\tilde{G}(\bz)}.
\end{split}
\end{equation}
 Under the phase damping channel, the Werner state $\rho$ becomes
\begin{equation}
\begin{split}
\tilde{\rho}=\frac{1}{4}[I\otimes I-c\sigma_3\otimes\sigma_3-c\sum_{i=1,2}(1-\gamma)\sigma_i\otimes\sigma_i].
\end{split}
\end{equation}
The eigenvalues of $\tilde{\rho}$ are $\tilde{\lambda}_1=\frac{1-c}{4}, \tilde{\lambda}_2=\frac{1-c}{4}, \tilde{\lambda}_3=\frac{1+3c-2c\gamma}{4}, \tilde{\lambda}_4=\frac{1-c+2c\gamma}{4}$.
The maximal value of $\tilde{G}(\bz)$ is written as
\begin{equation}
\begin{split}
\max {\tilde{G}(\bz)}=\frac{1+c}{2}log_2(1+c)+\frac{1-c}{2}log_2(1-c).
\end{split}
\end{equation}
Then the quantum discord of $\tilde{\rho}$ is given by
\begin{equation}
\begin{split}
\mathcal{Q}(\tilde{\rho})=&\frac{1+3c-2c\gamma}{4}log_2(1+3c-2c\gamma)\\
&+\frac{1-c+2c\gamma}{4}log_2(1-c+2c\gamma)\\&-\frac{(1+c)}{2}log_2(1+c).
\end{split}
\end{equation}
Thus,
\begin{equation}
\begin{split}
\mathcal{Q}(\rho)-\mathcal{Q}(\tilde{\rho})=&\frac{1-c}{4}log_2(1-c)+\frac{1+3c}{4}log_2(1+3c)\\
&-\frac{1+3c-2c\gamma}{4}log_2(1+3c-2c\gamma)\\
&-\frac{1-c+2c\gamma}{4}log_2(1-c+2c\gamma).
\end{split}
\end{equation}
Let $\mathcal{T}(c,\gamma)=\mathcal{Q}(\rho)-\mathcal{Q}(\tilde{\rho})$, the derivative of $\mathcal{T}(c,\gamma)$ is equal to
\begin{equation}
\begin{split}
\frac{\partial{\mathcal{T}}}{\partial\gamma}=\frac{c}{2}log_2\frac{1+3c-2c\gamma}{1-c+2c\gamma}.
\end{split}
\end{equation}
This is a strictly increasing function of $\gamma$. Thus, for fixed $c \in [0,1]$, the minimum of $\mathcal{T}(c,\gamma)$  is at $\gamma =0$. Therefore,when $\gamma\neq 0$, $\mathcal{Q}(\rho) > \mathcal{Q}(\tilde{\rho})$ . This shows that the quantum discord of  Werner state decreases under the phase damping channel.

When $c_1=c_2=0$ and $\bs=0$,the $\rho$ under the phase damping channel is given by
\begin{equation}
\begin{split}
\tilde{\rho}=\frac{1}{4}[I\otimes I+\sum_{i=1,2}r_i\sqrt{1-\gamma}\sigma_i\otimes I+r_3\sigma_3\otimes I+c_3\sigma_3\otimes\sigma_3].
\end{split}
\end{equation}
the eigenvalues of $\tilde{\rho}$ are $$\tilde{\lambda}_{1,2}=\frac{1\pm\sqrt{(r_1^2+r_2^2)(1-\gamma)+(c_3-r_3)^2}}{4},$$ $$ \tilde{\lambda}_{3,4}=\frac{1\pm\sqrt{(r_1^2+r_2^2)(1-\gamma)+(c_3+r_3)^2}}{4}.$$
The maximal value of $\tilde{G}(\bz)$ in \eqref{3.9} is given by
\begin{equation}
\begin{split}
\max {\tilde{G}(\bz)}= \frac{1}{2}(H_{\ep=0}(\varrho_+)+H_{\ep=0}(\varrho_-)),
\end{split}
\end{equation}
where $$\varrho_{\pm}=[1+\sqrt{(r_1^2+r_2^2)(1-\gamma)+(r_3\pm c_3)^2}].$$
Then $\mathcal{Q}(\tilde{\rho})=\mathcal{Q}(\rho)$.

When $|\bs|=0, c_3=0$ and $c_1=c_2=c$,the $\rho$ under the phase damping is described by
\begin{equation}
\begin{split}
\tilde{\rho}=\frac{1}{4}[I\otimes I+\sum_{i=1,2}r_i\sqrt{1-\gamma}\sigma_i\otimes I+r_3\sigma_3\otimes I+\sum_{i=1,2}(1-\gamma)c\sigma_i\otimes\sigma_i]
\end{split}
\end{equation}
We can also get that the eigenvalues of $\tilde{\rho}$ are $$\tilde{\lambda}_{1,2}=\frac{1}{4}[1\pm\sqrt{(1-\gamma)^2 2c^2+(1-\gamma)(r_1^2+r_2^2)+r_3^2+2\varsigma}],$$ $$\tilde{\lambda}_{3,4}=\frac{1}{4}[1\pm\sqrt{(1-\gamma)^2 2c^2+(1-\gamma)(r_1^2+r_2^2)+r_3^2-2\varsigma}],$$
where $\varsigma=\sqrt{c^4(1-\gamma)^4+(c^2r_1^2+c^2r_2^2)(1-\gamma)^3}$. Thus the maximal value of $\tilde{G}(\bz)$ in \eqref{3.9} is given by
\begin{equation}
\max {\tilde{G}(\bz)}= \frac{1}{2}(H_{\ep=0}(\xi_3)+H_{\ep=0}(\xi_4)).
\end{equation}
The  difference between quantum discord of the $\tilde{\rho}$ of  under the phase damping channel and quantum discord of $\rho$ is given by
\begin{equation}
\begin{split}
\mathcal{Q}(\rho)-\mathcal{Q}(\tilde{\rho})&=\frac{1}{2}\{H_{\ep=0}(\alpha_+)+H_{\ep=0}(\alpha_-)- (H_{\ep=0}(\beta_+)+H_{\ep=0}(\beta_-))
\\&-[H_{\ep=0}(\mu_+)+H_{\ep=0}(\mu_-)-(H_{\ep=0}(\sigma_+)+H_{\ep=0}(\sigma_-))]\}
\end{split}
\end{equation}
Where
$$\alpha_{\pm}=\sqrt{2c^2+r_1^2+r_2^2+r_3^2\pm2\sqrt{c^4+c^2r_1^2+c^2r_2^2}},$$
$$\beta_{\pm}=\sqrt{(r_1\pm\frac{r_1c}{\sqrt{r_1^2+r_2^2}})^2+(r_2\pm\frac{r_2c}{\sqrt{r_1^2+r_2^2}})^2+r_3^2},$$
$$\mu_{\pm}=\sqrt{2c^2(1-\gamma)+(r_1^2+r_2^2)(1-\gamma)+r_3^2\pm2\varsigma},$$
$$\sigma_{\pm}=\sqrt{(1-\gamma)((r_1\pm\sqrt{1-\gamma}\frac{r_1}{\sqrt{r_1^2+r_2^2}})^2+(r_2\pm\sqrt{1-\gamma}\frac{r_2}{\sqrt{r_1^2+r_2^2}})^2)+r_3^2}.$$

Now, we consider the state $\rho$ in Example 2. The state $\rho$ under phase damping channel is given by
\begin{equation}
\begin{split}
\tilde{\rho}=\frac{1}{4}[I\otimes I+0.1\sqrt{1-\gamma}\sigma_1\otimes I+0.2\sqrt{1-\gamma}\sigma_2\otimes I+0.3\sigma_3\otimes I\\+
0.25\sigma_3\otimes \sigma_3+0.25(1-\gamma)\sigma_1\otimes \sigma_1+0.25(1-\gamma)\sigma_2\otimes \sigma_2].
\end{split}
\end{equation}
For $\gamma = 0.2$, the eigenvalues of $\tilde{\rho}$ are $\tilde{\lambda}_1=0.399691, \tilde{\lambda}_2=0.328613, \tilde{\lambda}_3=0.217934, \tilde{\lambda}_4=0.0537617$ and the difference $\mathcal{Q}(\rho)-\mathcal{Q}(\tilde{\rho})=0.0583$. For $\gamma = 0.7$, the eigenvalues of $\tilde{\rho}$ are $\tilde{\lambda}_1=0.391011, \tilde{\lambda}_2=0.288475, \tilde{\lambda}_3=0.220322, \tilde{\lambda}_4=0.100192$ and the difference $\mathcal{Q}(\rho)-\mathcal{Q}(\tilde{\rho})=0.1426 $.
It is easily seen that $\mathcal{Q}(\rho)-\mathcal{Q}(\tilde{\rho})$ is different when $\gamma$ is different.

\bigskip

\section{Conclusions}

The quantum discord is an important quantum correlation with interesting applications.
It measures the difference between two natural quantum analogs of the classical mutual information.
Its computation is usually hard and exact formulas are difficult to derive.
For the general non-X-type quantum state,
we have given an analytical solution of the quantum discord in terms of the maximum of a one variable
function.

We have shown that the quantum discord essentially {\color{red} follows} the similar pattern as the other
types of quantum states. As an example,
we have shown that the quantum discord in the non-X-type case can also be computed exactly in several interesting regions. Using an example, our method is demonstrated to be able to
solve general non-X-type quantum states. We also studied the dynamics of the quantum discord under the Kraus operators,
and we have explained that there are cases the quantum discord is invariant under the process, while there are also
examples the quantum discord is changed. This is basically similar to the other situations.

In summary, the problem of the quantum discord for the general bipartite states {\color{red} follows} the similar pattern either
in the X-type or non-X-type.

\centerline{\bf Acknowledgments}
The corresponding author
gratefully acknowledges the partial support of NSFC grants 11426116, 11501251 and 11871325 during this work. The third author is supported
by NSFC grant 11531004 and Simons Foundation grant 523868.
 \vskip 0.1in

\bibliographystyle{amsalpha}

\end{document}